\apptocmd{\thebibliography}{}{}{}
\newcommand{\Tr}{\mathrm{Tr}}
\renewcommand{\thefootnote}
\newtheorem{lemma}{Lemma}
\begin{document}

\title{Classical post-processing approach for quantum amplitude estimation}

\author{Yongdan Yang}
\affiliation{Faculty of Science, Kunming University of Science and Technology, Kunming 650500, China}

\author{Ruyu Yang}
\email{yangruyu96@gmail.com}
\affiliation{Graduate School of China Academy of Engineering Physics, Beijing 100193, China}

\begin{abstract}

We propose an approach for quantum amplitude estimation (QAE) designed to enhance computational efficiency while minimizing the reliance on quantum resources. Our method leverages quantum computers to generate a sequence of signals, from which the quantum amplitude is inferred through classical post-processing techniques. Unlike traditional methods that use quantum phase estimation (QPE), which requires numerous controlled unitary operations and the quantum Fourier transform, our method avoids these complex and resource-demanding steps. By integrating quantum computing with classical post-processing techniques, our method significantly reduces the need for quantum gates and qubits, thus optimizing the utilization of quantum hardware. We present numerical simulations to validate the effectiveness of our method and provide a comprehensive analysis of its computational complexity and error. This hybrid strategy not only improves the practicality of QAE but also broadens its applicability in quantum computing.       

\end{abstract}

\keywords{Effective Model \quad Quantum computation .}

\maketitle

\section{Introduction}
Quantum computing holds the promise of significant computational advancements over traditional digital systems, particularly in specialized applications such as factorization \cite{shor1994algorithms} and unstructured database searches \cite{nielsen2001quantum}. Recently, many efforts have been dedicated to discovering new quantum algorithms to expand the application and potential of quantum computing. Among a variety of quantum algorithms, QAE plays a critical role due to its fundamental implications in enhancing the precision and efficiency of probability estimations in quantum states. Its applications are diverse, impacting fields ranging from quantum finance~\cite{rebentrost2018quantum,woerner2019quantum,egger2020credit,stamatopoulos2020option}, where it optimizes portfolio risks and derivative pricing, to quantum machine learning \cite{wiebe2015quantum,wiebe2016quantum,kerenidis2019q,kapoor2016quantum,li2019sublinear,miyahara2020quantum}, which benefits from improved pattern recognition and data classification. Additionally, QAE's influence extends to quantum chemistry \cite{knill2007optimal,kassal2008polynomial} for accurate molecular energy calculations and reaction rates, as well as solving complex optimization problems \cite{gacon2020quantum} and enhancing Monte Carlo simulations \cite{montanaro2015quantum,miyamoto2020reduction}.

Traditionally, QAE \cite{brassard2002quantum} combines QPE \cite{kitaev1995quantum} with the Grover search algorithm, requiring extensive quantum resources, including numerous ancillary qubits, controlled unitary operations and the quantum Fourier transform. These requirements make it unsuitable for early fault-tolerant quantum computers or noisy intermediate-scale quantum (NISQ) devices \cite{preskill2018quantum}. Recognizing these challenges, recent research has shifted towards developing more resource-efficient methods for conducting QAE that accommodate the constraints of existing quantum technologies.

According to Ref.~\cite{brassard2002quantum}, estimating the quantum amplitude can be transformed into estimating the phase of the corresponding amplification operator through QPE. Given the substantial research \cite{lin2022heisenberg,somma2019quantum,clinton2023phase,o2019quantum,russo2021evaluating,zhang2022computing} focused on reducing the quantum resource requirements for implementing the eigenvalue estimation algorithm, this naturally motivates the proposal of a resource-efficient approach for QAE. Notably, Suzuki et al \cite{suzuki2020amplitude} introduced an adaptation that uses fewer qubits and combines quantum operations with Maximum Likelihood Estimation (MLE) to circumvent the need for QPE, further streamlining the computational process. However, they did not provide a rigorous proof of their correctness. Similarly, the iterative quantum amplitude estimation (IQAE) \cite{grinko2021iterative} proposed by Grinko et al, employs a sequence of Grover-like iterations and classical statistical techniques to simplify the QAE process. These approaches reduce the quantum resource load by avoiding the full implementation of QPE, making them well-suited to the capabilities of near-term hardware technology. Additionally, there are other works \cite{ aaronson2020quantum,giurgica2022low,callison2022improved,rall2023amplitude,labib2024quantum,plekhanov2022variational} aimed at improving QAE algorithms. Our method proposed in this paper can be regarded as an extension of these algorithms.

In this paper, we introduce a classical post-processing approach for QAE. In our method, we transform the problem of quantum amplitude estimation into solving the phase of a specific amplification operator. We use the quantum computer to generate a series of classical signals containing amplitude information, which can then be extracted through classical post-processing, such as the classical Fourier transform. Our method is a resource-efficient algorithm that does not require the implementation of controlled time evolution. To demonstrate the efficacy of our proposed algorithm, we performed calculations on 4-qubit state quantum amplitude estimation and 6-qubit observable estimation. Moreover, we identify three distinct sources of error in our algorithm: (i) truncation of the evolution time, (ii) stochastic noise arising from finite sampling (shot noise), and (iii) circuit-level noise caused by hardware imperfections. Numerical simulations demonstrate the robustness of our algorithm against shot noise arising from finite measurements. This analysis provides insights into the algorithm's performance under a variety of conditions.

This paper is structured as follows: Sec. \uppercase\expandafter{\romannumeral2} introduces the preliminary theory of QAE. Sec. \uppercase\expandafter{\romannumeral3} provides a numerical simulation of our approach. In Sec. \uppercase\expandafter{\romannumeral4}, we analyze the algorithmic error. Sec. \uppercase\expandafter{\romannumeral5} concludes the paper.

\section{Algorithm}
 
In this section, we introduce some preliminary theories of QAE. Given two non-orthogonal states, $\ket{\psi}$ and $\ket{\phi}$, our ultimate goal is to estimate the quantum amplitude $|\braket{\psi|\phi}|^2$. The aim of QAE is to achieve the highest precision for $|\braket{\psi|\phi}|^2$ with the lowest computational expense. To accomplish this, we propose a classical post-processing approach that enhances the precision of QAE while avoiding the use of QPE. Our approach consists of two main steps: constructing the amplification operator and estimating the quantum amplitude without QPE. Additionally, we will demonstrate that our algorithm does not even require the Hadamard test. The specific process is outlined below.

\begin{figure}[t]
\centering\includegraphics[width=0.5\hsize]{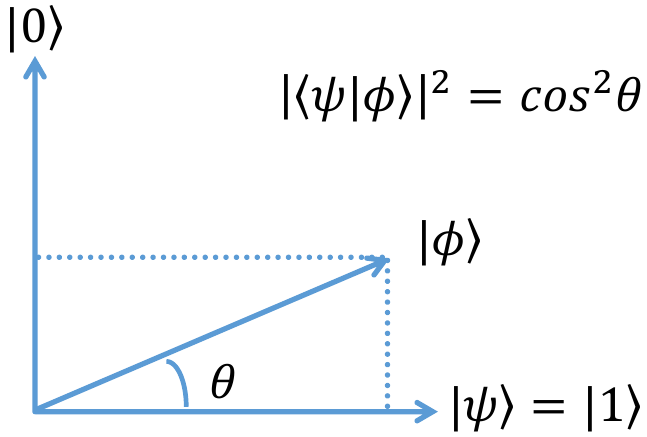}
\caption{The two-dimensional subspace spanned by states $\ket{\psi}$ and $\ket{\phi}$.}
\label{fig:sub}
\end{figure}

\subsection{Construct the amplification operator}

In this section, we introduce how to construct the amplification operator. The process is inspired by the work \cite{bravyi2011schrieffer}. We define two reflection operators as
\begin{equation} \label{reflec}
\begin{aligned}
&R_{\psi}=I-2\ket{\psi}\bra{\psi}      \\
&R_{\phi}=I-2\ket{\phi}\bra{\phi}.
\end{aligned}
\end{equation}
\begin{lemma}
Let $\mathscr{H}$ be a two-dimensional subspace spanned by states $\ket{\psi}$ and $\ket{\phi}$ shown in Fig.~\ref{fig:sub}. A unitary rotation in $\mathscr{H}$ from $\ket{\psi}$ to $\ket{\phi}$ can be written as: 
\begin{equation}
 U=\sqrt{R_{\phi}R_{\psi}}=exp(i\theta\sigma^{y})   
\end{equation}
Note that $\textcolor{red}{\sigma^y}$ is a Pauli operator defined in the $\mathscr{H}$, and $\theta$ is the angle between $\ket{\psi}$ and $\ket{\phi}$.  
\end{lemma}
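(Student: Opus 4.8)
The plan is to reduce the statement to an explicit $2\times 2$ computation by fixing a convenient orthonormal frame for $\mathscr{H}$. First I would take $\{\ket{\psi},\ket{\psi^\perp}\}$ as an orthonormal basis, so that $\ket{\psi}=(1,0)^T$, and since $\theta$ is the angle between the two states, $\ket{\phi}=\cos\theta\,\ket{\psi}+\sin\theta\,\ket{\psi^\perp}=(\cos\theta,\sin\theta)^T$. In this frame the rank-one projectors $\ket{\psi}\bra{\psi}$ and $\ket{\phi}\bra{\phi}$ are immediate, and I would substitute them into Eq.~(\ref{reflec}) to obtain the matrix forms of $R_{\psi}$ and $R_{\phi}$. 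Each is a genuine reflection (trace zero, determinant $-1$): $R_{\psi}$ fixes $\ket{\psi^\perp}$ and flips $\ket{\psi}$, while $R_{\phi}$ fixes $\ket{\phi^\perp}$ and flips $\ket{\phi}$.

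Next I would compute the product $R_{\phi}R_{\psi}$ directly. The conceptual heart of the argument is the classical planar fact that the composition of two reflections is a rotation through twice the angle between their mirror axes; since the fixed axes here are $\ket{\psi^\perp}$ and $\ket{\phi^\perp}$, whose separation is again $\theta$, the product should collapse—after the double-angle identities $1-2\cos^2\theta=-\cos 2\theta$ and $2\sin\theta\cos\theta=\sin 2\theta$—to the rotation matrix $R(2\theta)$. I would display the explicit matrix multiplication as a verification of this structural claim.

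The remaining step is to extract the square root and match it to the exponential. Because $R_{\phi}R_{\psi}=R(2\theta)$ is a planar rotation, its principal square root is simply $R(\theta)$, and I would check directly that $R(\theta)\ket{\psi}=(\cos\theta,\sin\theta)^T=\ket{\phi}$, confirming that $U=\sqrt{R_{\phi}R_{\psi}}$ is the rotation carrying $\ket{\psi}$ to $\ket{\phi}$. Finally, expanding $\exp(i\theta\sigma^{y})=\cos\theta\,I+i\sin\theta\,\sigma^{y}$ and comparing entries with $R(\theta)$ identifies the two.

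The main obstacle I anticipate is not any single computation but the non-uniqueness of the square root together with the sign and orientation conventions. A rotation by $2\theta$ admits several unitary square roots (for instance $R(\theta)$ and $R(\theta+\pi)$), so I must specify that $U$ denotes the principal branch—the minimal rotation in the orientation that actually sends $\ket{\psi}$ to $\ket{\phi}$ rather than to its reflected image. In the same vein I would have to pin down the sign convention for $\sigma^{y}$ in $\mathscr{H}$, and correspondingly the sign of $\theta$, so that $\exp(i\theta\sigma^{y})$ yields the correct handedness; this bookkeeping is precisely where an otherwise routine calculation can go astray.
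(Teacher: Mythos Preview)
Your proposal is correct and follows essentially the same route as the paper: fix a basis in $\mathscr{H}$ with $\ket{\psi}$ as one of the standard basis vectors, write $\ket{\phi}$ in terms of the angle $\theta$, compute the two reflections explicitly, multiply, and recognize the square root as $\exp(i\theta\sigma^y)$. The only cosmetic difference is that the paper phrases the computation in Pauli-operator language (writing $R_\psi=\sigma^z$, $R_\phi=\cos(2\theta)\sigma^z-\sin(2\theta)\sigma^x$, and using $\sigma^x\sigma^z=-i\sigma^y$) rather than in explicit $2\times2$ rotation-matrix form, and it does not comment on the branch ambiguity of the square root that you rightly flag.
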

\begin{proof}
Since $\ket{\psi}$ and $\ket{\phi}$ belong to the two-dimensional subspace $\mathscr{H}$, without loss of generality, we can set
\begin{equation}
\begin{aligned}
&\ket{\psi}=\ket{1} \\
&\ket{\phi}=cos(\theta)\ket{1}+sin(\theta)\ket{0},
\end{aligned}
\end{equation}
where $\ket{0}$ and $\ket{1}$ are the eigenstates of Pauli operator $\sigma^z$ defined in the subspace $\mathscr{H}$ . Using the definition in Eq.~\eqref{reflec}, we can get
\begin{equation}
\begin{aligned}
&R_{\psi}=I-2\ket{1}\bra{1}=\sigma^z \\
&R_{\phi}=cos(2\theta)\sigma^z-sin(2\theta)\sigma^x.
\end{aligned}
\end{equation}
Therefore, we can obtain the rotation operator as
\begin{equation}
\begin{aligned}
U=\sqrt{R_{\phi}R_{\psi}}=\sqrt{cos(2\theta)I+isin(2\theta)\sigma^y}=exp(i\theta\sigma^{y})
\end{aligned}
\end{equation}
\end{proof}
To implement the reflection operator in a system of $n$ qubits, as defined in Eq.\eqref{reflec}, we combine the state preparation operators $U_{\psi}$ and $U_{\phi}$, where $\ket{\psi}=U_{\psi}\ket{0}^{\otimes n}$ and $\ket{\phi}=U_{\phi}\ket{0}^{\otimes n}$, with the reflection operator $I-2(\ket{0}\bra{0})^{\otimes n}$. The latter is realized through a multi-controlled NOT gate, along with single-qubit Pauli gates and Hadamard gates \cite{nielsen2001quantum, chowdhury2018improved}. Consequently, the two reflection operators in Eq.\eqref{reflec} can be decomposed as follows
\begin{equation} \label{reflec1}
\begin{aligned}
&R_{\psi}=U_{\psi}(I-2(\ket{0}\bra{0})^{\otimes n})U_{\psi}^{\dagger}      \\
&R_{\phi}=U_{\phi}(I-2(\ket{0}\bra{0})^{\otimes n})U_{\phi}^{\dagger}.
\end{aligned}
\end{equation}

Based on Lemma 1, we can derive the equation about  quantum amplitude
\begin{equation}
\begin{aligned}
|\braket{\psi|\phi}|^2=cos^2(\theta),
\end{aligned}
\end{equation}
where assuming $0 \leq \theta \leq \pi/2$. Therefore, the task of estimating quantum amplitude can be transformed to estimate the value of angle $\theta$. Then, we can construct the amplitude amplification operator as
\begin{equation}
\begin{aligned} \label{7}
\mathscr{A}=U^2=exp(i2\theta\sigma^{y}).
\end{aligned}
\end{equation}
Given an initial state 
\begin{equation}
\begin{aligned} \label{ini}
\ket{\psi_0}=\frac{1}{\sqrt{2}}(\ket{y_+}+\ket{y_-}),
\end{aligned}
\end{equation}
where $\ket{y_\pm}$, corresponding to eigenvalues $\pm 1$, represent the eigenstates of Pauli operator $\sigma^{y}$ in the subspace$\mathscr{H}$, the process of amplitude amplification can be formulated as
\begin{equation} \label{inphase}
\begin{aligned}
\mathscr{A}^m \ket{\psi_0}=\frac{1}{\sqrt{2}}(exp(i2m\theta)\ket{y_+}+exp(-i2m\theta)\ket{y_-}). 
\end{aligned}
\end{equation}
Here, $m$ denotes the times of applying the amplification operator. According to Eq.~\eqref{inphase}, we have amplified the angle from $\theta$ to $2m\theta$, and the information of angle $2m\theta$ has been encoded into the phase of the final state $\mathscr{A}^m \ket{\psi_0}$. This process allows us to attain high-precision amplitude values. To estimate the quantum amplitude, we can extract the phase information of the final state using our approach, which is shown below.

\subsection{Estimating the quantum amplitude without QPE}

\begin{figure}[t]
\centering\includegraphics[width=1\hsize]{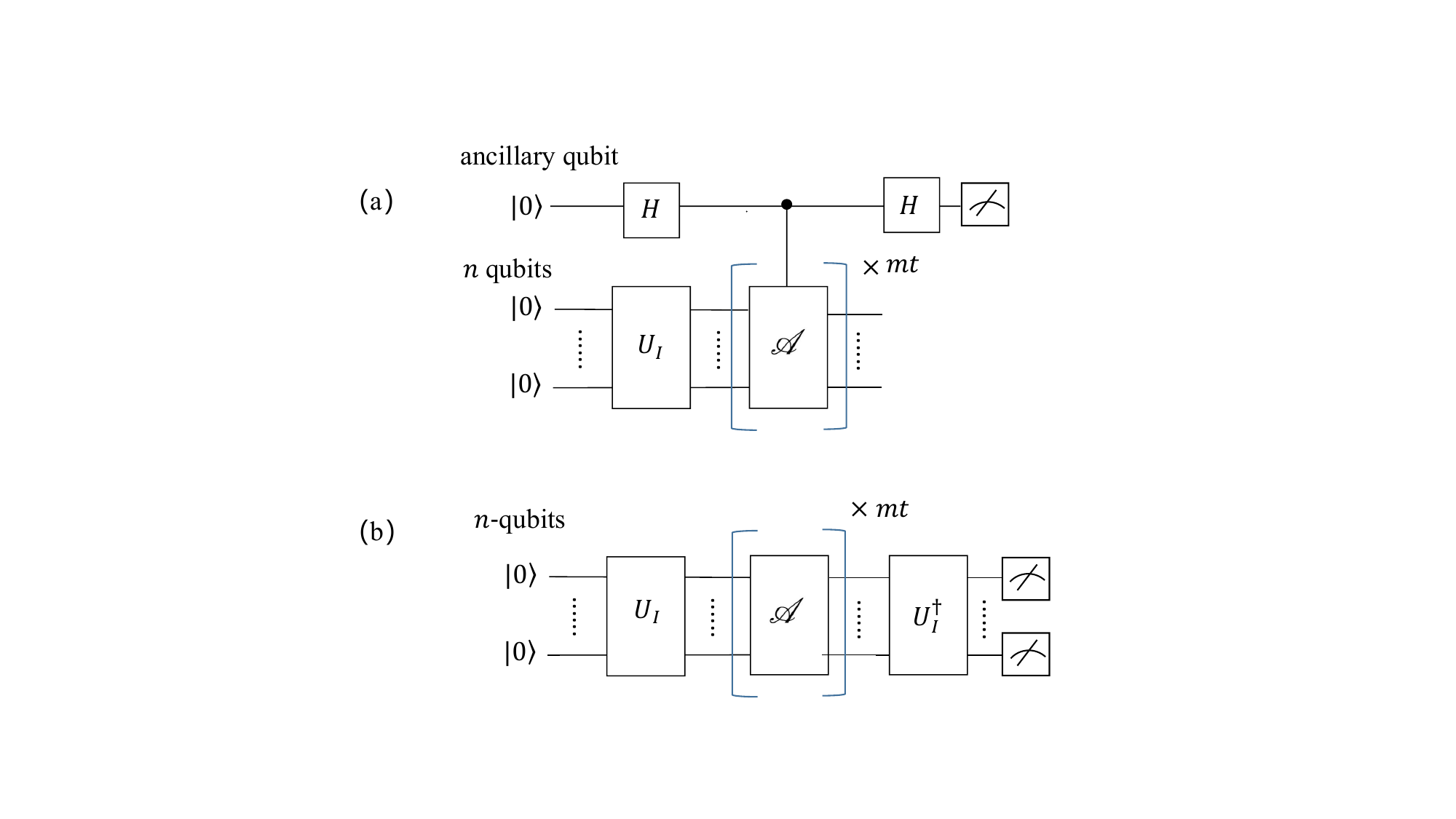}
\caption{(a) The quantum circuit for estimating the overlap $\bra{\psi_0}\mathscr{A}^{mt}\ket{\psi_0}$ with Hadamard test, where $n$ denotes the number of qubits required to represent the quantum state and $\ket{\psi_0}$ is the initial state satisfying $\ket{\psi_0}=U_I \ket{0}^{\otimes n}$. (b) The quantum circuit for estimating the overlap $|\bra{\psi_0}\mathscr{A}^{mt}\ket{\psi_0}|^2$ without Hadamard test.}
\label{fig:cir}
\end{figure}

To extract the information of the angle $\theta$ from the amplification operator $\mathscr{A}$ defined in Eq.~\eqref{7}, we consider $S(x)$ as the Fourier transform of function $f(t)$
\begin{align} 
S(x)=\sum_{t=-\infty}^{\infty} f(t)e^{i x t},
\label{eq:s(x)}
\end{align}
with $f(t)$ a function as
\begin{equation}
\begin{aligned}
f(t)&=p(t)\Tr[\mathscr{A}^{mt}\ket{\psi_0}\bra{\psi_0}]\\
&=p(t)\bra{\psi_0}\mathscr{A}^{mt}\ket{\psi_0}, 
\label{f}
\end{aligned}
\end{equation}
where $\ket{\psi_0}$ is the initial state and $\mathscr{A}^{mt}$ means applying the amplification operator $\mathscr{A}$ with $mt$ times. Note that the variables $t$ and $m$ must be integer. Here, $p(t)$ is a cooling function adopted from Ref.~\cite{zeng2021universal,yang2024resource} to guarantee that $S(x)$ can be efficiently evaluated within a finite range. Since it is shown that the Gaussian function has the best performance among several other example functions, in this work, we assume that $p(t)$ is 
the Gaussian function as an example in the further analysis, i.e., 
\begin{align}
p(t)=e^{-a^2t^2},
\end{align}
where $0<a<1$ is a tuning parameter. Therefore, the Eq.~\eqref{eq:s(x)} can be rewritten as
\begin{equation} 
\begin{aligned} \label{eq:sx}
S(x)&=\sum_{t=-\infty}^\infty e^{-a^2t^2}\Tr[\mathscr{A}^{mt}\ket{\psi_0}\bra{\psi_0}]e^{i x t}  \\
&=\sum_{t=-\infty}^\infty e^{-a^2t^2}\bra{\psi_0}\mathscr{A}^{mt}\ket{\psi_0}e^{i x t}  \\
&=\frac{\sqrt{\pi}}{2a}(e^{\frac{-(x+2m\theta)^2}{4a^2}}+e^{\frac{-(x-2m\theta)^2}{4a^2}}).
\end{aligned}
\end{equation}
The specific derivation process of Eq.~\eqref{eq:sx} is shown in Appendix~\ref{sec A}. By modifying the value of $x$, $S(x)$ exhibits a local maximum or peak when $x=\pm 2m\theta$. Hence, we can extract the information of the value of $\theta$. The value of the overlap $\bra{\psi_0}\mathscr{A}^{mt}\ket{\psi_0}$ in Eq.~\eqref{eq:sx} can be measured in a quantum computer by the Hadamard test. The quantum circuit is shown in Fig.~\ref{fig:cir}(a). It is worth noting that by increasing the depth of the quantum circuit, we can perform amplitude amplification, thereby scaling the angle $\theta$ to $2m\theta$. This approach enhances the precision of estimating $\theta$ with the same number of measurements, at the cost of increasing the circuit depth by a factor of $m$.

In our method, we utilize the summation method to directly compute $S(x)$ defined in the Eq.~\eqref{eq:s(x)}. During the process, we choose a finite summation range $[-T, T]$ instead of the infinite summation range $[-\infty,\infty]$. While the summation range of $t$ is changed from $[-\infty,\infty]$ to $[-T, T]$, it will induce a cutoff error, $\epsilon\sim \mathcal O(e^{-(aT)^2})$. 
The detailed error analysis can be found in the Appendix \ref{sec: cut}.

\subsection{Estimating the quantum amplitude without Hadamard test}

In certain cases, the Hadamard test can become a hindrance to further development, especially when the quantum circuit complexity increases or when noise becomes significant. The overhead introduced by the Hadamard test can degrade computational efficiency. In this section, we introduce how to estimate the quantum amplitude without the Hadamard test. The quantum circuit is shown in Fig.~\ref{fig:cir}(b). We set the function $f(t)$ mentioned in Eq.~\eqref{eq:s(x)} as 
\begin{equation}
\begin{aligned}
 f(t)&=e^{-a^2t^2}\Tr[\ket{\psi_0}\bra{\psi_0}\mathscr{A}^{mt}\ket{\psi_0}\bra{\psi_0}\mathscr{A}^{-mt}]\\
&=e^{-a^2t^2}|\bra{\psi_0}\mathscr{A}^{mt}\ket{\psi_0}|^2,  
\end{aligned}
\end{equation}
Where $\mathscr{A}^{-mt}$ denotes the inverse of $\mathscr{A}^{mt}$. Then, the Eq.~\eqref{eq:s(x)} can be written as
\begin{equation} 
\begin{aligned} \label{eq:sx2}
S(x)&=\sum_{t=-\infty}^\infty p(t)|\bra{\psi_0}\mathscr{A}^{mt}\ket{\psi_0}|^2 e^{i x t}  \\
&=\frac{\sqrt{\pi}}{4a}(e^{\frac{-(x+4m\theta)^2}{4a^2}}+e^{\frac{-(x-4m\theta)^2}{4a^2}}+2e^{\frac{-x^2}{4a^2}}).
\end{aligned}
\end{equation}
Here, the initial state is defined by Eq.~\eqref{ini}. By modifying the value of $x$, $S(x)$ exhibits a local maximum or peak when $x=\pm 4m\theta$ or $0$. Hence, we can extract the information of the value of $\theta$. Note that the value of the term $|\bra{\psi_0}\mathscr{A}^{mt}\ket{\psi_0}|^2$ in the Eq.~\eqref{eq:sx2} can be obtained by measuring the final state $\mathscr{A}^{mt}\ket{\psi_0}$ directly without Hadamard test. However, compared to the method mentioned in Section B, this method introduces additional peaks at $x = 2 \pi N$, where $N = 0, 1, 2, \dots$, which might interfere with the true peaks at $x = 4m\theta$.

\subsection{Application to observable estimation}

Our method can also be used to estimate the observable, e.g., $B=\sum_i c_i P_i$. Here, the coefficient $c_i$ is a real number, and $P_i$ denotes the Pauli operator string. We aim to estimate the average value $\bra{\psi} P_i \ket{\psi}$. According to the Lemma 1, we can set 
\begin{equation}
\begin{aligned} \label{14}
&\ket{\psi}=\ket{1} \\
&\ket{\phi}=P_i \ket{\psi} = \cos(\theta)\ket{1}+\sin(\theta)\ket{0}.
\end{aligned}
\end{equation}
Note that the Pauli operator string $P_i$ can also be regarded as a unitary operator. So the equation $\ket{\phi}=P_i \ket{\psi} = \cos(\theta)\ket{1}+\sin(\theta)\ket{0}$ is valid. Based on the Eq.~\eqref{14}, we can derive the equation 
\begin{equation}
\begin{aligned} \label{15}
\bra{\psi} P_i \ket{\psi}=\cos(\theta),
\end{aligned}
\end{equation}
where assuming $0 \leq \theta \leq \pi$. Therefore, the task of estimating the average value of the Pauli operator string $\bra{\psi} P_i \ket{\psi}$ can be transformed into estimating the value of angle $\theta$. Then, similar to the Eq.~\eqref{7} in section A, we can construct the amplitude amplification operator as
\begin{equation}
\begin{aligned}
\mathscr{A}=U^2=exp(i2\theta\sigma^{y}).
\end{aligned}
\end{equation}
The value of the angle $\theta$ can be obtained using the method described in Sections B and C.

\section{numerical simulation}

For numerical simulation, we can summarize our method as follows
\begin{enumerate}
    \item Select an initial state $\ket{\psi_0}$ which has a reasonable overlap with the eigenstates $\ket{y_{\pm}}$ in the subspace $\mathscr{H}$.
    
    \item Choose an appropriate summation range $[-T,T]$ as defined in Eq.~\eqref{eq:s(x)} and set a specific magnification factor, i.e., $m=M$. 
    
    \item Evolve the selected initial state $\ket{\psi_0}$ with amplification operator $\mathscr{A}^{mt}$ in a quantum computer, where $t$ must be a integer and range from $-T$ to $T$. 
    
    \item Perform measurements to obtain $\bra{\psi_0}\mathscr{A}^{mt}\ket{\psi_0}$ or $|\bra{\psi_0}\mathscr{A}^{mt}\ket{\psi_0}|^2$, and multiply it by $e^{ixt}$ for the parameter $x$. 
    
    \item Compute the summation $S(x)=\sum_{t=-T}^{T} f(t)e^{i x t}$ by a classical computer.
\end{enumerate}

\subsection{Estimate the quantum amplitude}

\begin{figure}[t]
\centering\includegraphics[width=0.95\hsize]{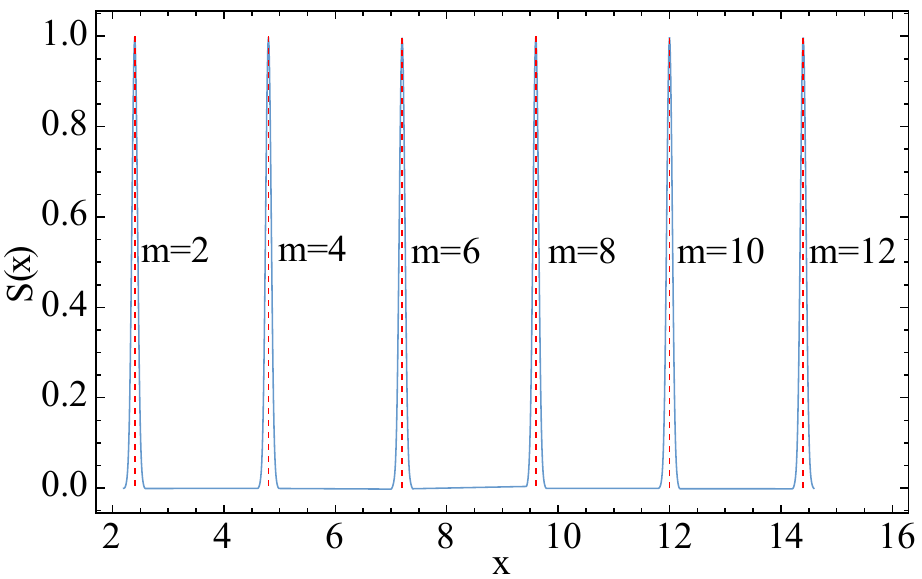}
\caption{The quantum amplitude of two 4-qubit state $\ket{\psi}$ and $\ket{\phi}$. In the figure, the red vertical line indicates the exact value of the $2 m \theta$, while the light blue line represents $S(x)$ as evaluated with our method. }
\label{fig:4q-qae}
\end{figure}

To demonstrate the efficiency of our approach, we choose two 4-qubit states $\ket{\psi}$ and $\ket{\phi}$ as an example and utilize our method to estimate the quantum amplitude $|\braket{\psi|\phi}|^2=cos^2(\theta)$, where our main task is to estimate the value of angle $\theta$. Here, we utilize the Eq.~\eqref{eq:sx} to extract the information of angle $\theta$. Therefore, the Hadamard test is necessary. 

The specific process to evaluate $S(x)$ is shown above and the final results are shown in Fig.~\ref{fig:4q-qae}. Here, the parameter $a$ is fixed as $\frac{1}{20 \sqrt{2}}$. We use a cutoff of the sum range from $[-\infty,\infty]$ to $[-60, 60]$ when evaluating $S(x)$ in Eq.~\eqref{eq:sx}. We set the exact value of the angle as $\theta=0.6$ and change the magnification factor $m$ from 2 to 12. As shown in Fig.~\ref{fig:4q-qae}, the location of peaks
matches well with the exact value of angle $2 m \theta$.

To further demonstrate the robustness of our method, we randomly select six pairs of 4-qubit states, $\ket{\psi}$ and $\ket{\phi}$, for quantum amplitude estimation. The corresponding results are shown in Fig.\ref{fig:4q-random}. In this experiment, the magnification factor is fixed at $m=10$, while the parameter $a$ is set to $\frac{1}{20 \sqrt{2}}$. We apply a cutoff in the summation range from $[-\infty, \infty]$ to $[-60, 60]$ when evaluating $S(x)$ in Eq.\eqref{eq:sx}. As shown in the figure, the location of the peaks in $S(x)$ closely matches the exact value of $2 m \theta$, demonstrating the accuracy and reliability of our method in the context of randomly selected quantum states.

\begin{figure}[t]
\centering\includegraphics[width=0.95\hsize]{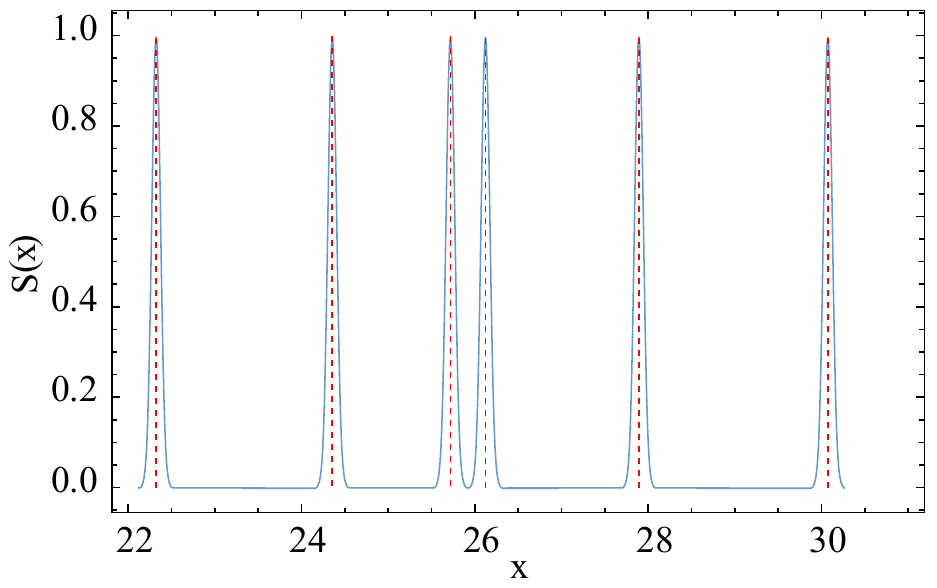}
\caption{The quantum amplitude of six pairs of randomly selected 4-qubit states $\ket{\psi}$ and $\ket{\phi}$. In the figure, the red vertical line indicates the exact value of the $2 m \theta$, while the light blue line represents $S(x)$ as evaluated using our method.}
\label{fig:4q-random}
\end{figure}

\subsection{Estimate the observable}

Our method can also be used to estimate the average value of an observable such as the Pauli string. According to the theory mentioned in Section \uppercase\expandafter{\romannumeral2}, we can transform the task of estimating an observable into estimating the value of angle $\theta$ as shown in Eq.~\eqref{15} i.e., $\bra{\psi} P_i \ket{\psi}=cos(\theta)$, where $P_i$ denotes the Pauli string. Here, we use the Eq.~\eqref{eq:sx2} to evaluate the function $S(x)$. Hence, we need to measure the term $|\bra{\psi_0}\mathscr{A}^{mt}\ket{\psi_0}|^2$ without implementing the Hadamard test.

The specific process of estimating $\theta$ is similar to the process of QAE mentioned above, and the final results are shown in Fig.~\ref{fig:6q-pauli}. Here, the parameter $a$ is fixed as $\frac{1}{20 \sqrt{2}}$. We use a cutoff of the summation range from $[-\infty,\infty]$ to $[-60, 60]$ when evaluating $S(x)$ in Eq.~\eqref{eq:sx2}. We set the exact value of the angle as $\theta=0.595$ and change the magnification factor $m$ from -1 to 14. As shown in Fig.~\ref{fig:6q-pauli}, the location of peaks matches well with the exact value of angle $4 m \theta$.  

\begin{figure}[t]
\centering\includegraphics[width=0.95\hsize]{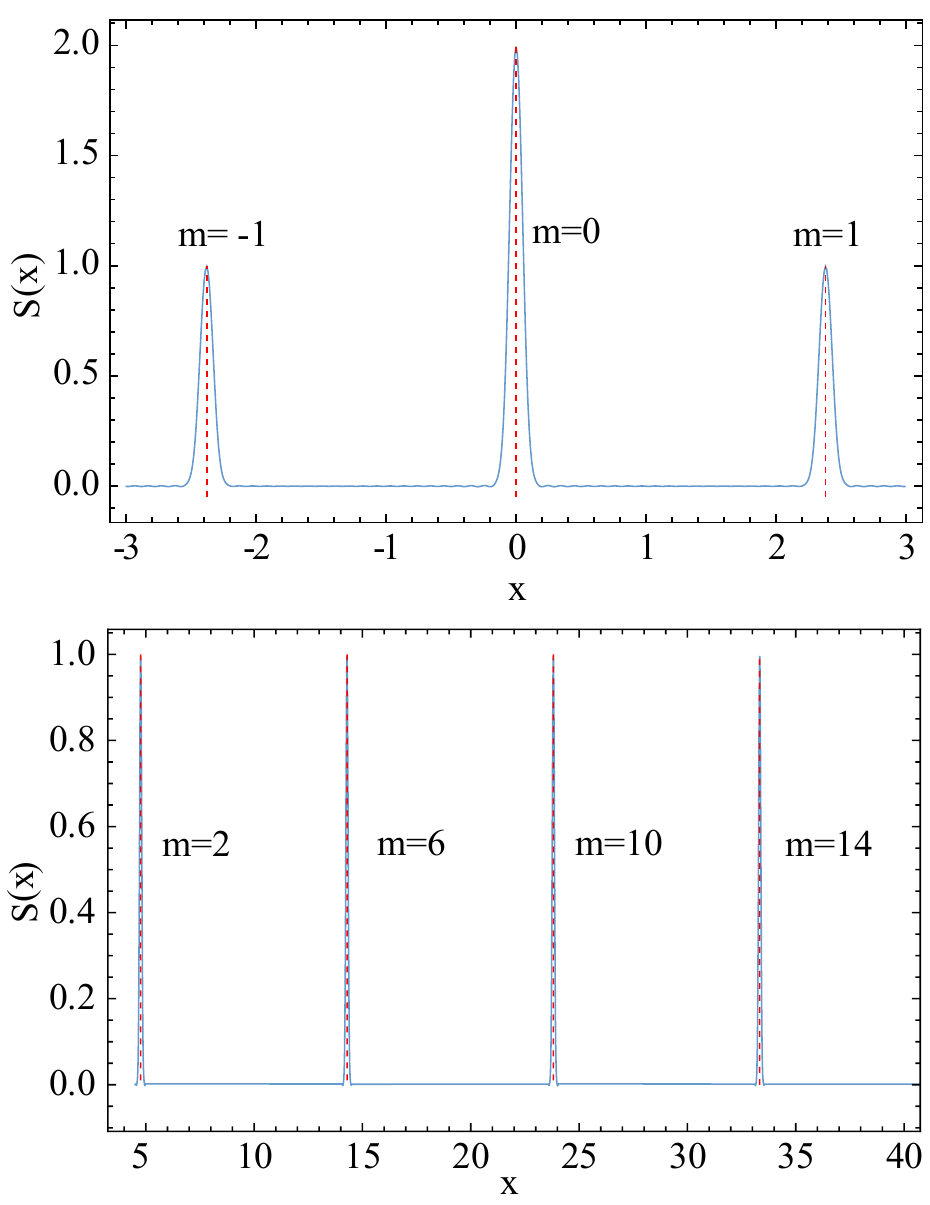}
\caption{Estimating the average value of the 6-qubit Pauli string $I_1 I_2 I_3 I_4 Z_5 Z_6$ under the  state $\ket{\psi}$. The red vertical line indicates the exact value of the $4 m \theta$, while the light blue line represents $S(x)$ as evaluated with our method.}
\label{fig:6q-pauli}
\end{figure}

\section{error analysis}
The core of our algorithm is to utilize Eq.~\eqref{eq:s(x)} to extract the information of quantum amplitude. During the process, the error of our algorithm mainly arises from \textcolor{red}{three aspects}: (1) To limit the evolution time, we use a cutoff for the summation range from $[-\infty,\infty]$ to $[-T, T]$ when evaluating $S(x)$ Eq.~\eqref{eq:s(x)}. (2) Shot noise is caused by the finite number of measurements when we measure the overlap $\bra{\psi_0}\mathscr{A}^{mt}\ket{\psi_0}$ or $|\bra{\psi_0}\mathscr{A}^{mt}\ket{\psi_0}|^2$ on a quantum computer. (3) Circuit-level noise. Next, we will discuss the three aspects separately.

\subsection{Error from cutoff}

We define $S(x)^\infty$ and $S(x)^T$ as the values obtained from an infinite summation range and a cutoff range, respectively. The cutoff error $\epsilon = |S(x)^\infty - S(x)^T|$ can be shown to satisfy $\epsilon \leqslant \frac{2}{a} e^{-a^2 T^2}$. The detailed error analysis can be found in Appendix \ref{sec: cut}. Therefore, if we want to constrain the error $\epsilon$ to $\epsilon_c$, the summation range should be no smaller than $\frac{1}{a}\sqrt{\ln\left(\frac{2}{a \epsilon_c}\right)}$. In other words, the error caused by the cutoff will decay exponentially as we increase the cutoff range. It is important to note that although the cutoff affects the values of $S(x)$ itself, in our algorithm, this does not affect the location of its peak. Hence, it does not impact the accuracy of the angle $\theta$ derived from $S(x)$. This aspect will be the focus of our subsequent discussion.

Here, we use Eq.\eqref{eq:sx} as the definition of $S(x)$. We choose the initial state as $\ket{y_-}$ (choosing $\ket{y_+}$ would be equivalent) and set the summation range to $[-T, T]$. Then, Eq.\eqref{eq:sx} can be written as: 
\begin{equation} \label{sx_ct} 
\begin{aligned} 
&S(x) = \sum_{t=-T}^T \bra{y_-}\mathscr{A}^{mt}\ket{y_-} e^{-a^2 t^2} e^{i x t} \\ 
&= 1+2e^{-a^2} \cos(x - 2m\theta)+2e^{-4a^2} \cos2(x - 2m\theta)+.... 
\end{aligned} 
\end{equation}
According to Eq.\eqref{sx_ct}, when $x = 2m\theta$, $S(x)$ exhibits a local maximum or peak. Particularly, when the cutoff range is set to $[-T, T] = [-1, 1]$, which represents the smallest possible range, Eq.\eqref{sx_ct} can be simplified to: \begin{equation} \label{sx_small}
S(x) = 1 + 2 e^{-a^2} \cos(x - 2m\theta). \end{equation} 
$S(x)$ can still exhibit a peak at $x = 2m\theta$. This means that, even when the cutoff range is reduced to its minimum, the accuracy of the final result remains unaffected. Next, we take the 4-qubit state as an example and estimate the quantum amplitude at different cutoff ranges. The final results are shown in Fig.~\ref{fig:4q-error}.

\begin{figure}[t] 
\centering\includegraphics[width=0.95\hsize]{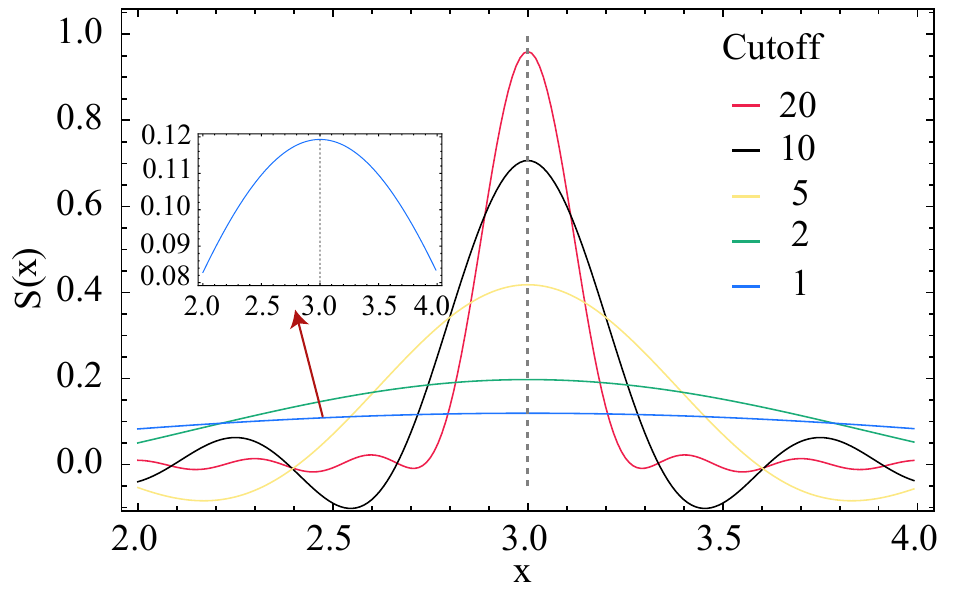}
\caption{The curves of $S(x)$ correspond to different cutoff ranges which are captured by different colors. The dashed vertical line marks the exact location of the $2 m \theta$}
\label{fig:4q-error}
\end{figure}

In Fig.~\ref{fig:4q-error}, we illustrate the influence of the cutoff range. We set the magnification factor as $m=1$. The initial state is chosen as $\ket{y_-}$. We set the trunction range from $[-20,20]$ to $[-1,1]$. The parameter $a$ is fixed to be $\frac{1}{10 \sqrt{2}}$.  As shown in the figure, $S(x)$ reaches a peak at different truncation range when $x=2m\theta$. It can be concluded that the truncation doesn't change the correct position of the peak but will reduce the resolution of the curve. 

\subsection{Error from shot noise}

\begin{figure*}[t] 
\centering\includegraphics[width=0.85\hsize]{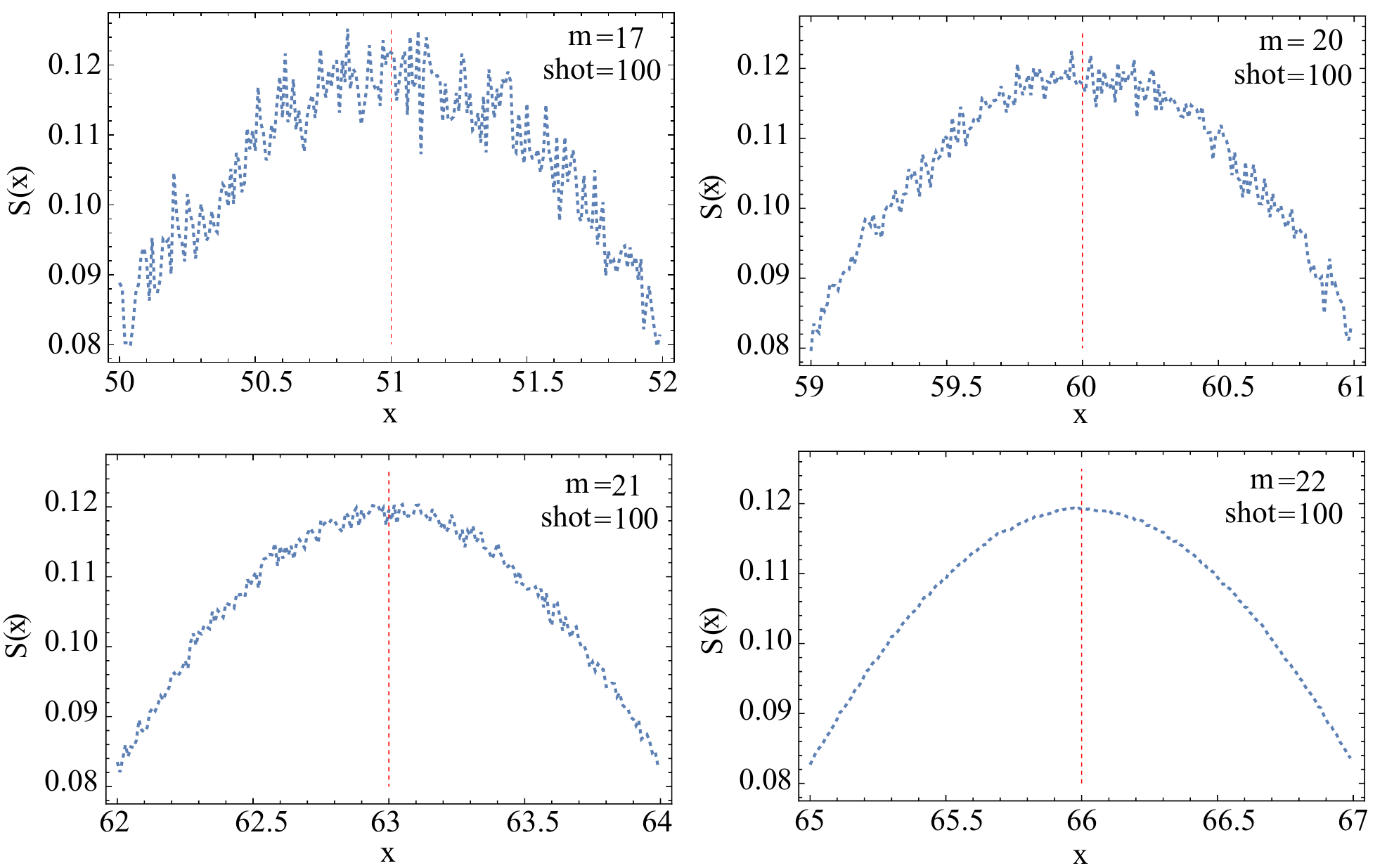}
\caption{The shot noise analysis of 4-qubit state's amplitude estimation. The dashed vertical line marks the exact location, while the light blue line represents $S(x)$ as evaluated with our method. The different figures are obtained by setting the different magnification factors $m$ with the same shot numbers.}
\label{fig:4q-shot noise}
\end{figure*}

In our algorithm, the most important step is to measure the overlap $\bra{\psi_0}\mathscr{A}^{mt}\ket{\psi_0}$ on a quantum computer. Due to the finite number of quantum measurements, the process will inevitably induce the shot noise. Here, we can write
\begin{equation} \label{sx_small} 
\begin{aligned}
\bra{\psi_0}\mathscr{A}^{mt}\ket{\psi_0}=\alpha+i\beta,\end{aligned} 
\end{equation}
where $\alpha$ and $\beta$ represent the real part and imaginary part of the overlap respectively. According to the principle of quantum measurement, the variance of the estimator $\hat{\alpha}$ and $\hat{\beta}$ can be formulated as
\begin{equation} \label{var} 
\begin{aligned}
Var(\hat{\alpha})&=\frac{(1-\alpha^2)}{N_{shot}}\\
Var(\hat{\beta})&=\frac{(1-\beta^2)}{N_{shot}},
\end{aligned} 
\end{equation}
where $N_{shot}$ represents the number of measurements. According to Eq.~\eqref{var}, when $|\alpha|$ or $|\beta|$ is close to 1, the variance reaches the minimum value. When we set the initial state as $\ket{y_-}$, Eq.~\eqref{sx_small} can be written as
\begin{equation} \label{overlap} 
\begin{aligned}
\bra{y_-}\mathscr{A}^{mt}\ket{y_-}=\cos(2mt\theta)-i\sin(2mt\theta)=\alpha-i\beta.
\end{aligned} 
\end{equation}
According to the Eq.~\eqref{var} and Eq.~\eqref{overlap}, when $\alpha$ satisfy the following equation 
\begin{equation} \label{condition} 
\begin{aligned}
\alpha=\cos(2mt\theta)=\pm 1, 
\end{aligned} 
\end{equation}
the variance of the estimator $\hat{\alpha}$ becomes minimal, that is, zero. Therefore, in practice, as long as we measure $\alpha$, we can infer 
$\beta$. Thus, when the variance of 
$\alpha$ decreases, the variance of 
$\beta$ will also decrease. Based on this, we can conclude that the variance of the overlap $\bra{y_-}\mathscr{A}^{mt}\ket{y_-}$ becomes minimal when the magnification factor satisfies 
\begin{equation} \label{mg} 
\begin{aligned}
m=\frac{N\pi}{2t\theta},
\end{aligned} 
\end{equation}
where $N=1,2,3,...$. This is a desirable property in QAE. Specifically, it indicates that by adjusting the amplification factor $m$, we can reduce the variance caused by measurements in a quantum computer. In particular, when $m$ satisfies the condition stated in Eq.~\eqref{mg}, the variance can be minimized with a finite number of measurements. For numerical simulation, we utilize the 4-qubit state to demonstrate the efficiency of our algorithm. 

In Fig.~\ref{fig:4q-shot noise}, we show how to minimize the influence of shot noise, i.e., reduce the variance of overlap $\bra{y_-}\mathscr{A}^{mt}\ket{y_-}$ when implementing the measurement, by adjusting the amplification factor $m$. We set the truncation range as $[-1,1]$. The parameter $a$ is fixed as $\frac{1}{10 \sqrt{2}}$. Here, we set the $\theta=3/2$ and the times of measurement $N_{shot}=100$. As shown in the figure, when the magnification factor satisfies Eq.~\eqref{mg}, the variance of the final result becomes minimal.

\subsection{Error from circuit noise}
In addition to cutoff and shot noise, circuit-level noise represents a critical source of error in real quantum computations. This section evaluates the impact of two-qubit gate noise on our method using numerical simulations conducted with the QuESTlink simulator~\cite{jones2020questlink}. To model the circuit noise, we neglect single-qubit gate errors and apply a depolarizing noise channel after each two-qubit gate:
\begin{equation}
\Phi(\rho) = (1 - \epsilon) U \rho U^\dagger + \epsilon \frac{I}{4},
\end{equation}
where $\epsilon$ is the error rate per gate, $U$ is the ideal two-qubit gate, and $I$ is the identity operator on a two-qubit space.

The overall noise depends strongly on the circuit depth, particularly the length of the state preparation operators $U_\psi$ and $U_\phi$ and the number of repeated applications of the amplification operator $\mathscr{A}^m$. Deeper circuits accumulate more noise, leading to diminished signal intensity in $S(x)$.

Our method, like that in *Algorithmic Shadow Spectroscopy*~\cite{chan2025algorithmic}, uses classical Fourier transforms to extract amplitude information from quantum signals. That work has already shown that the total gate noise must remain within a reasonable range—quantified by a parameter $\xi = \lambda \cdot N_{\text{gates}} \approx 1$—to retain usable spectral information. Therefore, our simulation results are consistent with their findings and not unexpected.

In our numerical simulation (see Fig.~\ref{fig:4q-circuit-noise}), we investigate the influence of varying gate error rates on a 4-qubit system. The exact peak location is set to $x = 26.575$, the amplification factor $m=5$, the smoothing parameter $a=\frac{1}{20 \sqrt{2}}$, and the summation range is truncated from $[-\infty,\infty]$ to $[-40,40]$. The results show that while the peak intensity decreases with increasing $\epsilon$, the peak position remains unchanged, highlighting the robustness of our algorithm against moderate circuit noise.

\begin{figure}[t]
\centering
\includegraphics[width=0.95\hsize]{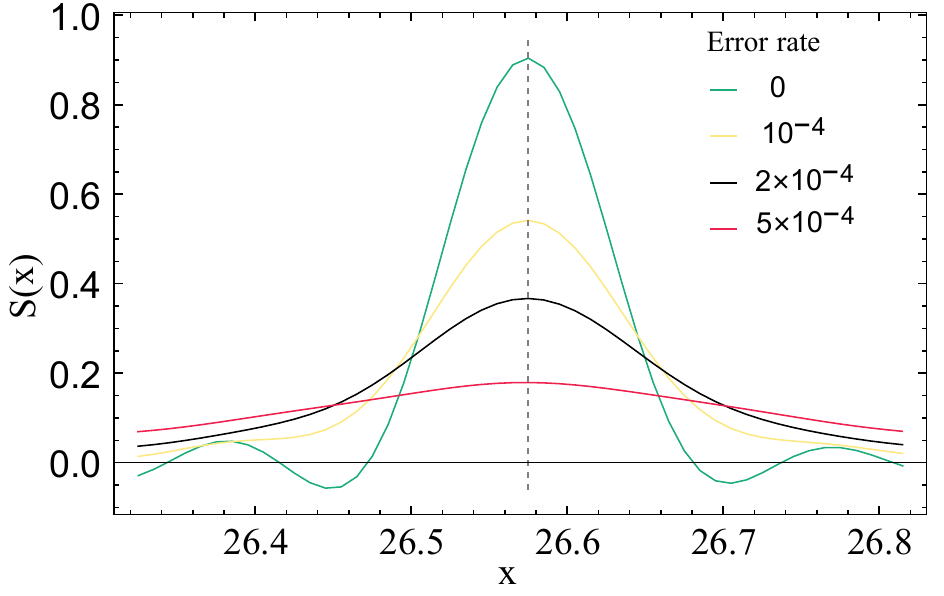}
\caption{The function $S(x)$ evaluated under different two-qubit depolarizing error rates. The gray dashed line marks the exact value $x = 26.575$. While increasing noise reduces the peak height, the position remains stable.}
\label{fig:4q-circuit-noise}
\end{figure}

\section{conclusion}

In this study, we proposed a classical post-processing approach for quantum amplitude estimation that significantly reduces the reliance on extensive quantum resources. Unlike conventional methods, which require numerous ancillary qubits and controlled unitary operations, our method utilizes a sequence of quantum-generated signals that are processed using classical techniques. This approach not only simplifies the overall computational effort but also makes more efficient use of available quantum hardware.

Our numerical simulations demonstrated the effectiveness of this method, showing that it can achieve high accuracy in amplitude estimation with only a fraction of the quantum gates typically required. These results, detailed in Section III, highlight the practical applicability of our method across a range of quantum states, confirming its potential to advance fields such as quantum chemistry and machine learning. Furthermore, our error analysis, outlined in Section IV, suggests that the hybrid approach maintains high precision under various simulation conditions, reinforcing its robustness and scalability. This analysis provides a foundational understanding that will guide future research on optimizing the algorithm for even broader applications in practical quantum computing environments.

Looking ahead, we plan to extend this method to more complex quantum systems and further integrate it with emerging quantum technologies. By continuing to refine the balance between quantum and classical processing, we aim to develop more versatile and powerful tools for the field of quantum computing.

\begin{acknowledgments}
This work is supported by Yunnan Fundamental Research Projects(grant NO.202401BE070001-018).
\end{acknowledgments}

\bibliography{ref.bib}
\onecolumngrid
\appendix

\section{Derivation Process of Eq.\eqref{eq:sx}}
\label{sec A}

This section details the derivation process of Eq.\eqref{eq:sx}.
\begin{equation} \label{a1}
\begin{aligned}
S(x)&=\sum_{t=-\infty}^\infty e^{-a^2t^2}\Tr[\mathscr{A}^{mt}\ket{\psi_0}\bra{\psi_0}]e^{i x t}  \\
&=\Tr[\sum_{t=-\infty}^\infty \mathscr{A}^{mt}\ket{\psi_0}\bra{\psi_0}e^{-a^2t^2} e^{i x t}]. 
\end{aligned}
\end{equation}
Here, we use the linearity of the trace operation, i.e., $\sum_i \Tr[A_i] = \Tr[\sum_i A_i]$ for any set of operators $A_i$. Based on the completeness relation, the operator $\mathscr{A}^{mt}$ can be expanded as
\begin{equation} 
\begin{aligned}
\mathscr{A}^{mt}&=\mathscr{A}^{mt} (\ket{y_+}\bra{y_+}+\ket{y_-}\bra{y_-}) \\
&=e^{2imt\theta}\ket{y_+}\bra{y_+}+e^{-2imt\theta}\ket{y_-}\bra{y_-}.
\end{aligned}
\end{equation}
Substituting this decomposition into Eq.~\eqref{a1}, we obtain
\begin{equation} 
\begin{aligned}
S(x)&=\Tr[\sum_{t=-\infty}^\infty (e^{2imt\theta}\ket{y_+}\bra{y_+}+e^{-2imt\theta}\ket{y_-}\bra{y_-})\ket{\psi_0}\bra{\psi_0}e^{-a^2t^2} e^{i x t}] \\
&=\Tr[\sum_{t=-\infty}^\infty (\ket{y_+} \braket{y_+|\psi_0}\bra{\psi_0}e^{-a^2t^2+i(x+2m\theta)t}+\ket{y_-} \braket{y_-|\psi_0}\bra{\psi_0}e^{-a^2t^2+i(x-2m\theta)t})] \\
&=\sum_{t=-\infty}^\infty ( \Tr[\ket{y_+} \braket{y_+|\psi_0}\bra{\psi_0}] e^{-a^2t^2+i(x+2m\theta)t}+\Tr[\ket{y_-} \braket{y_-|\psi_0}\bra{\psi_0}]e^{-a^2t^2+i(x-2m\theta)t}).
\end{aligned}
\end{equation}
Here, we utilize the equation
\begin{equation} 
\begin{aligned}
\Tr[\ket{y_+} \braket{y_+|\psi_0}\bra{\psi_0}]=\Tr[\ket{y_-} \braket{y_-|\psi_0}\bra{\psi_0}]=\frac{1}{2},
\end{aligned}
\end{equation}
where we have assumed the initial state $\ket{\psi_0} = \frac{1}{\sqrt{2}}(\ket{y_+}+\ket{y_-})$ as defined in Eq.\eqref{ini}. Therefore, Eq.\eqref{a1} simplifies to 
\begin{equation} \label{a5}
\begin{aligned}
S(x)&=\frac{1}{2}\sum_{t=-\infty}^\infty e^{-a^2t^2+i(x+2m\theta)t}+\frac{1}{2}\sum_{t=-\infty}^\infty e^{-a^2t^2+i(x-2m\theta)t} \\
&=\frac{\sqrt{\pi}}{2a}(e^{\frac{-(x+2m\theta)^2}{4a^2}}+e^{\frac{-(x-2m\theta)^2}{4a^2}}).
\end{aligned}
\end{equation}

\section{Cutoff error}
\label{sec: cut} 
To evaluate the impact of truncating the infinite summation in Eq.~\eqref{a5}, we restrict the summation range from $[-\infty, \infty]$ to $[-T, T]$. Specifically, we define
\begin{equation} 
\begin{aligned}
S(x)^\infty=\frac{1}{2}\sum_{t=-\infty}^\infty e^{-a^2t^2+i(x+2m\theta)t}+\frac{1}{2}\sum_{t=-\infty}^\infty e^{-a^2t^2+i(x-2m\theta)t} \\
S(x)^T=\frac{1}{2}\sum_{t=-T}^T e^{-a^2t^2+i(x+2m\theta)t}+\frac{1}{2}\sum_{t=-T}^T e^{-a^2t^2+i(x-2m\theta)t} 
\end{aligned}
\end{equation}
The absolute error introduced by this cutoff is then given by 
\begin{equation} \label{b2}
\begin{aligned}
&|S(x)^\infty - S(x)^T| \\
&=\frac{1}{2}\sum_{t=-\infty}^\infty e^{-a^2t^2+i(x+2m\theta)t}-\frac{1}{2}\sum_{t=-T}^T e^{-a^2t^2+i(x+2m\theta)t}+\frac{1}{2}\sum_{t=-\infty}^\infty e^{-a^2t^2+i(x-2m\theta)t}-\frac{1}{2}\sum_{t=-T}^T e^{-a^2t^2+i(x-2m\theta)t} \\
&=\frac{1}{2}\sum_{t=-\infty}^{-T} e^{-a^2t^2+i(x+2m\theta)t}+\frac{1}{2}\sum_{t=T}^\infty e^{-a^2t^2+i(x+2m\theta)t}+\frac{1}{2}\sum_{t=-\infty}^{-T} e^{-a^2t^2+i(x-2m\theta)t}+\frac{1}{2}\sum_{t=T}^\infty e^{-a^2t^2+i(x-2m\theta)t}
\end{aligned}
\end{equation}
Using the inequality $\sum_{t} e^{-a^2t^2 + i(x \pm 2m\theta) t} \leq \sum_{t} e^{-a^2t^2}$, we can bound the error as

\begin{equation} 
\begin{aligned}
&|S(x)^\infty - S(x)^T| \leq \sum_{t=-\infty}^{-T} e^{-a^2t^2}+\sum_{t=T}^\infty e^{-a^2t^2} \\
&=\frac{2}{a}\rm{erfc}(aT).
\end{aligned}
\end{equation}

Here, $\text{erfc}(x)$ denotes the complementary error function, satisfying the inequality $\rm{erfc}(x)\leqslant e^{-x^2}$. Thus, a further bound can be derived: $|S(x)^\infty - S(x)^T|\leqslant\frac{2}{a}e^{-a^2T^2}$. Therefore, if we want to constrain the error by $\epsilon_c$, the summation range should be no smaller than $\frac{1}{a}\sqrt{\rm{ln\frac{2}{a\epsilon_c}}}$.

\end{document}